\documentclass[11pt]{article}

\usepackage{latexsym}
\usepackage{theorem}
\usepackage{color,graphicx}
\usepackage{amssymb}
\usepackage{amsmath}
\usepackage{txfonts}
\usepackage{enumitem}
\usepackage{fullpage}

\newenvironment{proof}[1][Proof]
{\par\noindent{\bf #1:} }{\hspace*{\fill}\nolinebreak{$\Box$}\bigskip\par}
\newcommand{\qed}{\hspace*{\fill}\nolinebreak\ensuremath{\Box}}

\newtheorem{theorem}{Theorem}
\newtheorem{lemma}{Lemma}[section]
\newtheorem{claim}[lemma]{Claim}
\newtheorem{corollary}[theorem]{Corollary}
\newtheorem{observation}[lemma]{Observation}

\theorembodyfont{\upshape}

\newcommand{\cS}{\mathcal{S}}
\newcommand{\cP}{\mathcal{P}}

\newcommand{\cM}{\mathcal{M}}
\newcommand{\cI}{\mathcal{I}}

\newcommand{\Mshared}[1][]{\cM \if!#1!\else (#1) \fi}
\newcommand{\Mpriv}{\cP}

\newcommand{\complTime}[3]{C_{#1}^{#3}(#2)}  
\newcommand{\startTime}[3]{s_{#1}^{#3}(#2)}  
\newcommand{\tct}[1]{\varSigma(#1)}                  

\newcommand{\reals}{\mathbb{R}}
\newcommand{\st}{\hspace{0.1cm}\bigl|\bigr.\hspace{0.1cm}}

\newcommand{\jobs}{\mathcal{J}}

\newcommand{\probShort}{\textup{\texttt{WSPS}}}

\renewcommand\footnotemark{}
\begin{document}

\title{\textbf{Shared processor scheduling}}

\author{
  Dariusz Dereniowski\\
  \small{\emph{Faculty of Electronics,}}\\
  \small{\emph{Telecommunications and Informatics},}\\
  \small{\emph{Gda{\'n}sk University of Technology},}\\
  \small{\emph{Gda{\'n}sk, Poland}}
\and
  Wies{\l}aw Kubiak\\
  \small{\emph{Faculty of Business Administration},}\\
  \small{\emph{Memorial University},}\\
  \small{\emph{St. John's, Canada}}
}

\date{\let\thefootnote\relax\footnote{Emails: deren@eti.pg.gda.pl (Dariusz Dereniowski) and wkubiak@mun.ca (Wies{\l}aw Kubiak)}}

\maketitle

\begin{abstract}
We study the shared processor scheduling problem with a single shared processor where a unit time saving (weight) obtained by processing a job on the shared processor depends on the job.  A polynomial-time optimization algorithm has been given for the problem with equal weights in the literature. This paper extends that result by showing an $O(n \log n)$ optimization algorithm for a class of instances in which non-decreasing order of jobs with respect to processing times provides a non-increasing order with respect to weights --- this instance generalizes the unweighted case of the problem. This algorithm also leads to a $\frac{1}{2}$-approximation algorithm for the general weighted problem. The complexity of the weighted problem remains open.
\end{abstract}

\textbf{Keywords:} divisible jobs, scheduling, shared processor

\section{Introduction}
Consider a subcontracting system in which each agent $j$ has a job of duration $p_j$ to be executed.
Such an agent can perform the work by itself, in which case the job ends after $p_j$ units of time, or it can send (subcontract) a part of length $s_j\leq p_j/2$ of this job to a subcontractor for processing.
The subcontractor needs to complete this piece of agent's $j$ job by $p_j-s_j$, i.e., the speedup in terms of the completion time that the agent achieves in this scenario is exactly $s_j$, or in other words, the work of agent $j$ is completed at time moment $p_j-s_j$.
Whenever $s_j>0$, the subcontractor is rewarded by agent $j$: the payoff of executing $s_j$ units of $j$-th agent's job is $s_jw_j$.
The goal of the subcontractor is to maximize its total payoff under the condition that the parts of jobs received from different agents cannot be executed simultaneously by the subcontractor.
Thus, in this subcontracting system all agents try to minimize completion times of their jobs (the parameters $p_j$ and $w_j$ are fixed for each agent $j$), i.e. they are willing to commission the biggest possible part of their jobs to the subcontractor.
The subcontractor is the party that decides what amount, if any, of each job to process in order to maximize its total payoff.

The shared processor scheduling problem can be placed in a wider context of scheduling with presence of private (local) processors (machines), available only to a particular job or a set of jobs, and shared (global) processors that are available to all jobs.
Then, some additional rules are given in order to specify the conditions under which a job can gain access to a shared processor in such systems.
These systems can be run as either centralized or decentralized.
The former typically has a single optimization criterion forcing all parties to achieve the same goal.
The latter emphasizes that each party is trying to optimize its own goal, which may (and often does) lead to problems having no solutions which are optimal for each agent (job) individually.
These problems can be seen as multi-criteria optimization or coordination problems. The latter and can be further subdivided into problems in which agents have complete knowledge about resources of other agents (complete information games) and problems without such a full knowledge (distributed systems) in the search for coordinating mechanisms.
This work falls into the category of centralized problems as the subcontractor is deciding on the schedule that reflects its best interest.

\medskip
The outline of this paper is as follows.
In the next section we briefly survey the related work to provide a state of the art overview.
Section~\ref{sec:problem} gives a formal statement of the scheduling problem we study and it introduces the necessary notation.
Then, in Section~\ref{sec:preliminaries}, we recall some facts related to the problem, mainly the fact that when computing optimal schedules one may restrict attention to schedules that are called \emph{synchronized}.
This generally greatly reduces the formal arguments and algorithmic approach.
Section~\ref{sec:antithetical} considers a restricted version of the problem in which it is assumed that for any pair of jobs, neither of the jobs can have weight and processing time to be strictly smaller than the other.
We give an $O(n\log n)$-time optimization algorithm for this case, and we use it subsequently as a building block to obtain an $O(n\log n)$-time $1/2$-approximation algorithm for the general case in Section~\ref{sec:approx}.

\section{Related work} \label{sec:related}

The shared processor scheduling problem has recently been studied by  Vairaktarakis and Aydinliyim  \cite{VairaktarakisAydinliyim07},  Hezarkhani and Kubiak \cite{HK15}, and  Dereniowski and Kubiak \cite{DK16}.
Vairaktarakis and Aydinliyim  \cite{VairaktarakisAydinliyim07} consider the (unweighted) problem with a single shared  processor and with each job allowed to use at most  one time interval on the shared  processor. This case is sometimes referred to as non-preemptive since jobs are not allowed preemption on the shared processor.  \cite{VairaktarakisAydinliyim07} proves that there are optimal schedules that complete job execution on private and shared processor at the same time, we call such schedules \emph{synchronized}, for the non-preemptive case with equal weights. It further shows that this guarantees that sequencing jobs in ascending order of their processing times leads to an optimal solution for the case.  \cite{HK15} observes that this algorithm also gives optimal solutions to the preemptive problem, 
where more than one interval can be used by a job on the shared processor, provided that all weights are equal. \cite{DK16} consider shared multi-processor problem proving its strong NP-hardness and giving an efficient, polynomial-time algorithm 
for the shared multi-processor problem with equal weights.
Also, it is shown in \cite{DK16} that synchronized optimal schedules always exist for weighted multi-processor instances.
Vairaktarakis and Aydinliyim \cite{VairaktarakisAydinliyim07} , Vairaktarakis  \cite{V13}, and Hezarkhani and Kubiak \cite{HK15} also study decentralized subcontracting systems focusing on coordinating mechanisms to ensure their efficiency.

The motivation to study the shared processor scheduling problem comes from diverse applications. Vairaktarakis and Aydinliyim  \cite{VairaktarakisAydinliyim07} consider it in the context of  supply chains were subcontracting allows jobs to reduce their completion times by using a shared subcontractor's processor. Bharadwaj et. al.  \cite{HBGR03}  use the divisible load scheduling to reduce a job completion time in parallel and distributed computer systems, and  Anderson \cite {A81} argues for using
batches of potentially infinitely small items that can be processed independently of other items of the batch in scheduling job-shops. We refer the reader to Dereniowski and Kubiak \cite{DK16} for more details on these applications.

We also remark multi-agent scheduling models in which each agent has its own optimality criterion and performs actions aimed at optimizing it.
In these models, being examples of decentralized systems, agents usually have a number of non-divisible jobs to execute (depending on the optimization criterion this may be seen as having one divisible job, but restricted by allowing preemptions only at certain specified points).
For minimization of weighted total completion time in such models see Lee et. al.~\cite{LeeCLP09} and weighted number of tardy jobs see Cheng, Ng and Yuan~\cite{ChengNY06}.
Bukchin and Hanany~\cite{BukchinH07} give an example of a game-theoretic analysis to a problem of this type.
For overviews and further references on the multi-agent scheduling we refer to the book by Agnetis et. al. \cite{ABGPS14}.

\section{Problem formulation} \label{sec:problem}
We are given a set $\jobs$ of $n$ preemptive jobs.
Each job $j\in\jobs$ has its processing time $p_j$ and weight $w_j$.
With each job $j\in\jobs$ we associate its \emph{private} processor denoted by $\Mpriv_j$.
Moreover, there exists a single shared processor, denoted by $\Mshared$, that is available for all jobs. We follow the convention and notation from~\cite{DK16} to formulate the problem in this paper.

A schedule $\cS$ is \emph{feasible} if satisfies the following conditions:
\begin{itemize}
 \item each job $j\in\jobs$ executes non-preemptively in a \emph{single} time interval $(0,\complTime{\cS}{j}{\Mpriv})$ on its private processor and there is a (possibly empty) collection of open intervals $\cI_j$ such that $j$ executes non-preemptively in each time interval $I\in\cI_j$ on the shared processor,
 \item for each job $j\in\jobs$,
 \[\complTime{\cS}{j}{\Mpriv}+\bigcup_{I\in\cI_j}|I| = p_j,\]
 \item the time intervals in $\bigcup_{j\in\jobs}\cI_j$ are pairwise disjoint.
\end{itemize}

Given a feasible schedule $\cS$, for each job $j\in\jobs$ we call any maximal time interval in which $j$ executes on both private $\Mpriv_j$ and shared $\Mshared$ simultaneously an \emph{overlap}.
The total overlap $t_j$ of job $j$ equals the sum of lengths of all overlaps for $j$.
The \emph{total weighted overlap} of $\cS$ equals
\[\tct{\cS}=\sum_{j\in\jobs}t_jw_j.\] 
A feasible schedule that maximizes the total weighted overlap is called \emph{optimal}.

The formulation our \emph{Weighted Single-Processor Scheduling} problem ($\probShort$), is as follows.
\begin{itemize} [leftmargin=2.0cm]
 \item[Instance:] A set of weighted jobs $\jobs$ with arbitrary given processing times.
 \item[Goal:] Find an optimal schedule for $\jobs$.
\end{itemize}

\section{Preliminaries} \label{sec:preliminaries}

Let $\cS$ be a feasible schedule.
We denote by $\startTime{\cS}{j}{\Mshared}$ and $\complTime{\cS}{j}{\Mshared}$ the start time and the completion times of a job $j$ on the shared processor, respectively.
For brevity we take $\startTime{\cS}{j}{\Mshared}=\complTime{\cS}{j}{\Mshared}=0$ if a job $j$ executes on its private processor only.
Whenever $\startTime{\cS}{j}{\Mshared}<\complTime{\cS}{j}{\Mshared}$, i.e., some non-empty part of a job $j$ executes on $\Mshared$, then we say that the job $j$ \emph{appears on} $\Mshared$ in schedule $\cS$.
If, in a schedule $\cS$, there is no idle time on the shared processor in time interval
\[\left[0,\max\{\complTime{\cS}{j}{\Mshared}\st j\in \jobs\}\right],\]
then we say that $\cS$ \emph{has no gaps}. We have the following results form the literature.
\begin{observation}[\cite{DK16}] \label{obs:idle-time}
There exists an optimal schedule that has no gaps.
\end{observation}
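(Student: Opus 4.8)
The plan is a left-shifting exchange argument. Start from any optimal schedule $\cS$ and suppose it has a gap, i.e., there is idle time on $\Mshared$ somewhere in $[0,L]$, where $L=\max\{\complTime{\cS}{j}{\Mshared}\st j\in\jobs\}$. I would build a new schedule $\cS'$ by leaving every private interval $(0,\complTime{\cS}{j}{\Mpriv})$ untouched and ``compressing'' the shared processor: delete all idle time and slide the busy pieces toward $0$, keeping their order. Formally, a point $x$ that is busy on $\Mshared$ in $\cS$ is moved to $x-h(x)$, where $h(x)$ is the total $\Mshared$-idle time in $[0,x]$; since a shared interval $I\in\cI_j$ contains no idle time in its interior, $h$ is constant on $I$ and $I$ is merely translated to the left by that constant, retaining its length. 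After the operation $\Mshared$ is busy throughout $[0,L-h(L)]$, so $\cS'$ has no gaps by definition.

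Next I would verify that $\cS'$ is feasible and that $\tct{\cS'}\ge\tct{\cS}$. Feasibility on the private side is immediate, and the shared intervals stay pairwise disjoint because the compression is order preserving and injective on the busy set; moreover $|I|$ is unchanged for every $I\in\cI_j$, so $\complTime{\cS}{j}{\Mpriv}+\sum_{I\in\cI_j}|I|=p_j$ still holds with the same value of $\complTime{\cS}{j}{\Mpriv}$. For the objective, fix $j$ and a shared interval $I=(a,b)$ of $j$, and write $\delta=h(a)=h(b)$, so $0\le\delta\le a$. Then the translated interval $(a-\delta,b-\delta)$ starts at a nonnegative time and its overlap with $(0,\complTime{\cS}{j}{\Mpriv})$ has length $\max\{0,\min\{b,\complTime{\cS}{j}{\Mpriv}+\delta\}-a\}$, which is at least $\max\{0,\min\{b,\complTime{\cS}{j}{\Mpriv}\}-a\}$, the contribution of $I$ to $t_j$ in $\cS$. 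Summing over the shared intervals of $j$ (the overlaps of $j$ split across these intervals) and then over all jobs with their weights yields $\tct{\cS'}\ge\tct{\cS}$; hence $\cS'$ is optimal and gap-free.

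The heart of the argument is this monotonicity step, and it really just reflects that every job's private execution is anchored at time $0$, so pushing any chunk of a job's shared processing leftward can only increase the part of it that runs in parallel with the private processor. The only care needed is the short case analysis behind the displayed inequality (the shared interval lying wholly inside, straddling, or wholly to the right of the private interval) together with the remark that no interval is shifted past time $0$ because $h(x)\le x$. A cosmetic alternative that avoids defining the global compression map is to induct on the number of maximal $\Mshared$-idle intervals in $[0,L]$: locate the first one, slide everything after it leftward to close it --- which creates no new idle time and strictly decreases the total --- and repeat.
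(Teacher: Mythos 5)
Your proof is correct. Note that the paper does not prove this observation at all --- it is imported from \cite{DK16} as a black box --- so there is no in-paper argument to compare against; your left-compression argument (translate each shared interval of $j$ leftward by the accumulated idle time, observe that lengths and hence the feasibility constraint $\complTime{\cS}{j}{\Mpriv}+\sum_{I\in\cI_j}|I|=p_j$ are preserved, and that each interval's intersection with the anchored private interval $(0,\complTime{\cS}{j}{\Mpriv})$ can only grow) is the standard and essentially unavoidable way to establish it, and the monotonicity inequality you display is the right key step.
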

A schedule $\cS$ is called \emph{non-preemptive} if each job $j$ executes in $\cS$
in time interval $[\startTime{\cS}{j}{\Mshared},\complTime{\cS}{j}{\Mshared}]$ on the shared processor.
We say that a schedule is \emph{synchronized} if it satisfies the following conditions:
\begin{enumerate}[label={\normalfont{(\roman*)}}]
 \item it is non-preemptive and has no gaps,
 \item for each job $j$ that appears on the shared processor it holds $\complTime{\cS}{j}{\Mshared}=\complTime{\cS}{j}{\Mpriv}$.
\end{enumerate}

\begin{theorem}[\cite{DK16}] \label{thm:synchronized}
There exists an optimal synchronized schedule.
\end{theorem}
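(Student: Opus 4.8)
The plan is to start from an optimal schedule with no gaps — one exists by Observation~\ref{obs:idle-time} — and to bring it to synchronized form through a finite chain of local modifications, each feasibility-preserving and never decreasing $\tct{\cS}$. The first step is to make the schedule \emph{lossless}, i.e.\ to arrange that $\bigcup_{I\in\cI_j}I\subseteq(0,\complTime{\cS}{j}{\Mpriv})$ for every job $j$, so that every time unit a job spends on $\Mshared$ is an overlap unit. If a maximal piece of $\cI_j$ sits at time $\ge\complTime{\cS}{j}{\Mpriv}$, it contributes nothing to $t_j$; deleting it from $\Mshared$ and appending an equally long piece to $j$'s private execution preserves $\complTime{\cS}{j}{\Mpriv}+\sum_{I\in\cI_j}|I|=p_j$, leaves every $t_i$ (hence $\tct{\cS}$) unchanged, and touches no other job. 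The gaps this opens on $\Mshared$ are then removed by shifting the later shared pieces to the left, which by Observation~\ref{obs:idle-time} preserves optimality and, since a shared piece moved left remains inside the unchanged private window of its job, preserves losslessness. From now on $\cS$ is optimal, lossless and gap-free; in particular $\complTime{\cS}{j}{\Mshared}\le\complTime{\cS}{j}{\Mpriv}$ whenever $j$ appears on $\Mshared$.

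The second step removes preemptions. Suppose $\cS$ is still preemptive: some job $j$ uses two shared intervals $(a,b)$ and $(c,d)$ that are consecutive among $j$'s intervals, with $b<c$, and then the block $(b,c)$ is entirely occupied by other jobs. Swap the order of $j$'s first interval and this block, letting the block occupy $(a,a+(c-b))$ and $j$'s first interval occupy $(a+(c-b),c)$, so that it merges with $(c,d)$. The schedule stays gap-free; every block job has its shared interval shifted left, so — it still lying inside its private window by losslessness — its overlap is unchanged; and $j$'s shared time, of total length $(b-a)+(d-c)$ before and after and wholly inside $(0,\complTime{\cS}{j}{\Mpriv})$ since $d\le\complTime{\cS}{j}{\Mpriv}$, gives the same $t_j$. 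Hence $\tct{\cS}$ is preserved and the number of maximal shared intervals strictly drops; iterating, $\cS$ becomes non-preemptive (and stays optimal, lossless and gap-free).

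It remains to establish condition (ii). Let $j_1,\dots,j_k$ be the jobs on $\Mshared$ in order of appearance, so they occupy consecutive intervals $(\sigma_0,\sigma_1),\dots,(\sigma_{k-1},\sigma_k)$ with $\sigma_0=0$ and $\complTime{\cS}{j_i}{\Mshared}=\sigma_i$; losslessness gives $\sigma_i\le\complTime{\cS}{j_i}{\Mpriv}$, and $\complTime{\cS}{j_i}{\Mpriv}=p_{j_i}-(\sigma_i-\sigma_{i-1})$, so the inequality reads $\sigma_i\le\tfrac12(p_{j_i}+\sigma_{i-1})$. I claim it must be an equality for every $i$, which is exactly (ii). If it is strict for some $i$, push the boundary $\sigma_i$ slightly to the right by $\delta>0$: $j_i$'s interval grows by $\delta$ and $j_{i+1}$'s (or, for $i=k$, the idle region after $\sigma_k$) shrinks by $\delta$, with $j_i$'s private completion dropping to $\complTime{\cS}{j_i}{\Mpriv}-\delta$ and, if $i<k$, $j_{i+1}$'s rising by $\delta$; for small $\delta$ the schedule stays feasible, gap-free, non-preemptive and lossless, and $\tct{\cS}$ changes by $(w_{j_i}-w_{j_{i+1}})\delta$ (by $w_{j_k}\delta$ when $i=k$). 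If $w_{j_i}>w_{j_{i+1}}$ this strictly increases $\tct{\cS}$, contradicting optimality; if $w_{j_i}<w_{j_{i+1}}$ the symmetric move (pushing $\sigma_i$ left, possible unless $\sigma_i=\sigma_{i-1}$, which would mean $j_i$ does not appear) does so; and if $w_{j_i}=w_{j_{i+1}}$ the move keeps $\tct{\cS}$ optimal while strictly decreasing the potential $\sum_m(\complTime{\cS}{j_m}{\Mpriv}-\sigma_m)$, so it cannot occur in an optimal schedule of the present form that minimizes that potential (the minimum is attained, the appearing set and order ranging over finitely many choices and the $\sigma_m$ over a compact polytope). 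Hence the schedule is synchronized. The step I expect to be the main obstacle is this last one: although every local move is elementary, one must track how it changes $\tct{\cS}$ — the sign being governed by the weights — while simultaneously preserving all three structural properties (losslessness, absence of gaps, non-preemption) and then closing the argument with the minimality/compactness reasoning; this is also why the statement only asserts that some optimal schedule is synchronized rather than that all are.
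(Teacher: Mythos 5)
Your first two reductions (making an optimal gap-free schedule \emph{lossless} and then non-preemptive by block swaps) are sound in outline, but the last step --- forcing $\complTime{\cS}{j_i}{\Mshared}=\complTime{\cS}{j_i}{\Mpriv}$ for every job appearing on $\Mshared$ --- contains a genuine error, and that is where the real content of the theorem lies. (The paper imports this statement from \cite{DK16} without proof, so your attempt has to stand on its own.) The flaw is in your computation of how $\tct{\cS}$ responds to moving a single boundary $\sigma_i$. By your own argument for $i=k$, the last job is tight, so if slack exists at all there is an index $i$ with slack whose successor $j_{i+1}$ is tight, $\sigma_{i+1}=\complTime{\cS}{j_{i+1}}{\Mpriv}$; this case cannot be avoided. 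There, pushing $\sigma_i$ left by $\delta$ gives $j_{i+1}$ an extra $\delta$ of shared time at the front but also advances its private completion to $\complTime{\cS}{j_{i+1}}{\Mpriv}-\delta$, so the tail $(\sigma_{i+1}-\delta,\sigma_{i+1})$ of its shared interval ceases to be an overlap: $t_{j_{i+1}}$ is unchanged, $t_{j_i}$ drops by $\delta$, and $\tct{\cS}$ \emph{decreases} by $w_{j_i}\delta$ rather than increasing by $(w_{j_{i+1}}-w_{j_i})\delta$ as you claim (the move also destroys losslessness). So when $w_{j_i}<w_{j_{i+1}}$ both of your moves lose value and no contradiction is reached --- yet slack can still be strictly suboptimal in that regime. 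With two jobs and $w_2/2<w_1<w_2$, any configuration with $0<\sigma_1<p_1/2$ and job $2$ tight has objective $w_1\sigma_1+w_2(p_2-\sigma_1)/2$, strictly increasing in $\sigma_1$, hence strictly suboptimal; but your right-push reports a change of $(w_1-w_2)\delta<0$ and your left-push a change of $-w_1\delta<0$. Your single-boundary test is therefore too weak to certify that optimal schedules have no slack.

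The missing idea is that the perturbation must cascade through the entire suffix: when $\sigma_i$ moves by $\delta$, each later boundary must move by $\delta/2^{m-i}$ so that every $j_m$ with $m>i$ remains synchronized, and the resulting change in $\tct{\cS}$ is $\pm\delta\bigl(w_{j_i}-\sum_{m>i}w_{j_m}/2^{m-i}\bigr)$. Optimality forces this quantity to vanish whenever slack is present, after which one cascades all the way to eliminate the slack at zero cost and inducts on the number of slack positions. This geometric re-synchronization is exactly the $\varepsilon/2^{i'-i-1}$ bookkeeping the paper uses in its own exchange arguments in Lemma~\ref{lem:antithetical}; without it, you are comparing $w_{j_i}$ against $w_{j_{i+1}}$ when the correct comparison is against $\sum_{m>i}w_{j_m}/2^{m-i}$, and the proof does not go through.
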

Consider a synchronized schedule $\cS$.
Let $A=\{j_1,\ldots,j_k\}\subseteq\jobs$ be the set of all jobs that appear on $\Mshared$ in $\cS$, where the jobs are ordered according to their completion times in $\cS$, i.e. $\complTime{\cS}{j_1}{\Mshared}<\cdots<\complTime{\cS}{j_k}{\Mshared}$.
Note that the set $A$ and the order are enough to determine the schedule $\cS$.
Indeed, given the order $(j_1,\ldots,j_k)$ we obtain that for each $i\in\{1,\ldots,k\}$ (by proceeding with increasing values of $i$),
\[\startTime{\cS}{j_i}{\Mshared}=\complTime{\cS}{j_{i-1}}{\Mshared} \quad\textup{and}\quad \complTime{\cS}{j_i}{\Mshared}=\complTime{\cS}{j_i}{\Mpriv}=\left(p_i+\startTime{\cS}{j_i}{\Mshared}\right)/2,\]
where $\complTime{\cS}{j_0}{\Mshared}=0$.
This formula implies that the start times and completion times can be iteratively computed for all jobs.
Thus for synchronized schedules we write for brevity $\cS=(j_1,\ldots,j_k)$ to refer to the schedule computed above.

\section{A $O(n log n)$-time optimal algorithm for antithetical instances} \label{sec:antithetical}
We call an instance $\jobs$ of the problem \emph{antithetical} if for any two jobs $i$ and $j$ it holds: $p_i\leq p_j$ implies $w_i \geq w_j$.
We call a schedule $\cS$ \emph{processing time ordered} if $\cS=(j_1,\ldots,j_n)$, where $p_{j_i}\leq p_{j_{i+1}}$ for each $i\in\{1,\ldots,n-1\}$.
Observe that by construction, $\cS$ is synchronized and all jobs from $\jobs$ appear on the shared processor, see \cite{VairaktarakisAydinliyim07} and \cite{HK15}.
We now prove that any processing time ordered schedule is an optimal solution for an antithetical instance.  This gives an $O(n\log n)$-time optimization algorithm for antithetical instances. 
We remark that this algorithm generalizes the previously known solutions for the unweighted case ($w_1=\cdots=w_n$) from \cite{HK15,VairaktarakisAydinliyim07}.

\begin{lemma} \label{lem:antithetical}

A processing time ordered schedule is optimal for any antithetical instance of the problem $\probShort$.
\end{lemma}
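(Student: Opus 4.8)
The plan is to prove the slightly stronger assertion that a processing time ordered schedule $\cS_0=(j_1,\ldots,j_n)$ --- with $p_{j_1}\le\cdots\le p_{j_n}$, and therefore (by the antithetical assumption) $w_{j_1}\ge\cdots\ge w_{j_n}$ --- has total weighted overlap at least that of \emph{every} feasible schedule, not merely every synchronized one. The key device is a horizontal (``layer cake'') slicing of the objective. For each job $j$ we have $w_jt_j=\int_0^{w_j}t_j\,d\lambda$, so summing over the jobs and exchanging the sum and the integral gives, for any feasible schedule $\cS$,
\[\tct{\cS}=\int_0^{\infty}f_{\cS}(\lambda)\,d\lambda,\qquad f_{\cS}(\lambda):=\sum_{j\in\jobs\,:\,w_j\ge\lambda}t_j,\]
where $f_{\cS}(\lambda)$ is the total (unweighted) overlap in $\cS$ of the jobs whose weight is at least $\lambda$. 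It therefore suffices to prove $f_{\cS}(\lambda)\le f_{\cS_0}(\lambda)$ for every $\lambda>0$.

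To understand the right-hand side, fix $\lambda$ and put $\jobs_\lambda=\{\,j\in\jobs : w_j\ge\lambda\,\}$. By the antithetical property a large weight forces a small processing time, so the non-increasing sequence $w_{j_1}\ge\cdots\ge w_{j_n}$ shows that $\jobs_\lambda$ is a \emph{prefix} $\{j_1,\ldots,j_m\}$ of the processing time order. Since $\cS_0$ is synchronized with all jobs appearing (as recalled just before the statement), the overlaps of a prefix telescope, giving $\sum_{i=1}^{m}t_{j_i}=\complTime{\cS_0}{j_m}{\Mshared}$, and the completion times on $\Mshared$ within a prefix do not depend on the later jobs. Hence $f_{\cS_0}(\lambda)=\complTime{\cS_0}{j_m}{\Mshared}$ is exactly the total overlap of the processing time ordered schedule of the sub-instance $\jobs_\lambda$; and by the known optimal solution of the unweighted problem~\cite{VairaktarakisAydinliyim07,HK15}, this value equals the maximum total overlap attainable by \emph{any} feasible schedule of $\jobs_\lambda$.

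It remains to bound $f_{\cS}(\lambda)$ by that maximum. Take an arbitrary feasible schedule $\cS$ and delete from it every job that is not in $\jobs_\lambda$, leaving each surviving job with its single private interval and its (pairwise disjoint) intervals on $\Mshared$. The result is a feasible schedule of the instance $\jobs_\lambda$, and deleting the other jobs changes neither the private nor the shared occupation of any surviving job, hence none of their overlaps; so the total overlap of $\jobs_\lambda$ in this new schedule is exactly $f_{\cS}(\lambda)$. Consequently $f_{\cS}(\lambda)$ is at most the maximum total overlap over feasible schedules of $\jobs_\lambda$, which by the previous paragraph is $f_{\cS_0}(\lambda)$. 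Integrating over $\lambda$ yields $\tct{\cS}\le\tct{\cS_0}$, which is the lemma.

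I do not expect a conceptual obstacle; the work is in the verifications. The two points to be careful with are: (a) that the ``delete the other jobs'' reduction is legitimate for the fully general feasible schedules of Section~\ref{sec:problem} --- which may preempt jobs on $\Mshared$ and may have gaps --- and that it preserves each surviving job's overlap exactly; and (b) the identification of $f_{\cS_0}(\lambda)$ with the unweighted optimum of $\jobs_\lambda$, which is precisely where both the antithetical hypothesis (to make $\jobs_\lambda$ a prefix) and the cited unweighted result are used. A direct exchange argument that bubbles an optimal synchronized schedule into processing time order also looks feasible, but there one must contend with the fact that an adjacent swap on $\Mshared$ shifts the start times of all subsequent jobs; the slicing argument sidesteps this cascade.
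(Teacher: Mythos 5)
Your proof is correct, but it takes a genuinely different route from the paper's. The paper proves the lemma by an exchange argument: starting from an optimal \emph{synchronized} schedule (via Theorem~\ref{thm:synchronized}) chosen to maximize the number of jobs on $\Mshared$ and, subject to that, to minimize a ``violation point,'' it performs an adjacent transposition of two out-of-order jobs and then two further schedule surgeries (one to rule out $w_{j+1}<q$ in the tie case $w_j=w_{j+1}$, one to re-synchronize and to insert a job missing from $\Mshared$), deriving a contradiction in each case. Your layer-cake decomposition $\tct{\cS}=\int_0^\infty f_{\cS}(\lambda)\,d\lambda$ instead reduces the weighted antithetical problem to the unweighted problem on each level set $\jobs_\lambda$: the antithetical hypothesis enters exactly as the statement that $\jobs_\lambda$ is a prefix of the processing-time order; the deletion step checks directly against the feasibility definition of Section~\ref{sec:problem} (a sub-collection of pairwise disjoint intervals is pairwise disjoint, and no surviving job's private or shared intervals, hence overlaps, change); and the telescoping of overlaps along a prefix of a synchronized schedule identifies $f_{\cS_0}(\lambda)=\complTime{\cS_0}{j_m}{\Mshared}$ with the unweighted optimum of $\jobs_\lambda$ via the cited results of \cite{VairaktarakisAydinliyim07,HK15}. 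What your approach buys is a shorter argument that bounds $\tct{\cS_0}$ against \emph{all} feasible (possibly preempted, gapped) schedules at once, needing neither Theorem~\ref{thm:synchronized} for the competitor schedule nor any control of the cascade of start-time shifts that an adjacent swap induces; what it gives up is self-containedness, since the optimality of SPT for the equal-weights preemptive problem is used as a black box, whereas the paper's exchange argument reproves that fact as the special case $w_1=\cdots=w_n$. The only point worth flagging is the implicit assumption $w_j\ge 0$, which the identity $w_jt_j=\int_0^{w_j}t_j\,d\lambda$ requires and which is standard for this problem.
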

\begin{proof}
Let $\cS$ be an optimal schedule for an antithetical instance $\jobs$.
By Theorem~\ref{thm:synchronized} we can assume that $\cS$ is synchronized.
We assume without loss of generality that the jobs in $\jobs=\{1,\ldots,n\}$ are ordered in non-decreasing order of their processing times, i.e. $p_1\leq p_2 \leq \cdots\leq p_n$.
Let $A \subseteq \jobs$ be the set of jobs that appear on the shared processor in $\cS$.
Let $\pi(1), \pi(2), \ldots,\pi(k)$, $k=|A|$, be the order of jobs on the shared processor in $\cS$, i.e., $\cS=(\pi(1), \pi(2), \ldots,\pi(k))$.

The largest index $i\in\{1,\ldots,k-1\}$ such that either
\begin{equation} \label{eq:contradictp}
p_{\pi(i)}>p_{\pi(i+1)}
\end{equation}
or
\begin{equation} \label{eq:contradictA}
\textup{there exists }\bar{j}\in\jobs\setminus A \textup{ such that } \bar{j}>\pi(i+1)
\end{equation}
is called the \emph{violation point of} $\cS$; set the violation point to be zero if no such index $i$ exists.
Note that violation point equals zero only for processing time ordered schedules.

\medskip
Among all optimal and synchronized schedules we take $\cS$ to satisfy the following:
\begin{enumerate} [label={\normalfont{(\alph*)}}]
 \item\label{it:antiA1} the number of jobs that appear on $\Mshared$ in $\cS$ is maximum, and
 \item\label{it:antiA2} with respect to~\ref{it:antiA1}: the violation point of $\cS$ is minimum.
\end{enumerate}

We aim at proving that the violation point of $\cS$ is zero, that is:
\begin{equation} \label{eq:AisJ}
A=\jobs
\end{equation}
and
\begin{equation} \label{eq:pi}
p_{\pi(i)}\leq p_{\pi(i+1)}\textup{ for each }i\in\{1,\ldots,k-1\},
\end{equation}
which immediately implies the lemma.

We prove these claims by contradiction.
Let $i>0$ be the violation point of $\cS$.
By definition, we have that one of the cases \eqref{eq:contradictp} or \eqref{eq:contradictA} holds.
We should arrive at a contradiction in both cases and we start by analyzing the case of \eqref{eq:contradictp}, that is, we assume that \eqref{eq:contradictp} holds for the violation point $i$.
For antithetical instances we have $w_{\pi(i)}\leq w_{\pi(i+1)}$.
Also, for convenience let without loss of generality we denote $j=\pi(i+1)$ and $j+1=\pi(i)$. Thus $p_{j+1}>p_j$ by  (\ref{eq:contradictp}).
In the next two paragraphs we describe a transition from $\cS$ to a new schedule $\cS'$. This transition is depicted in Figure~\ref{fig:exchange}.
\begin{figure}[htb]
\begin{center}
\includegraphics[scale=0.8]{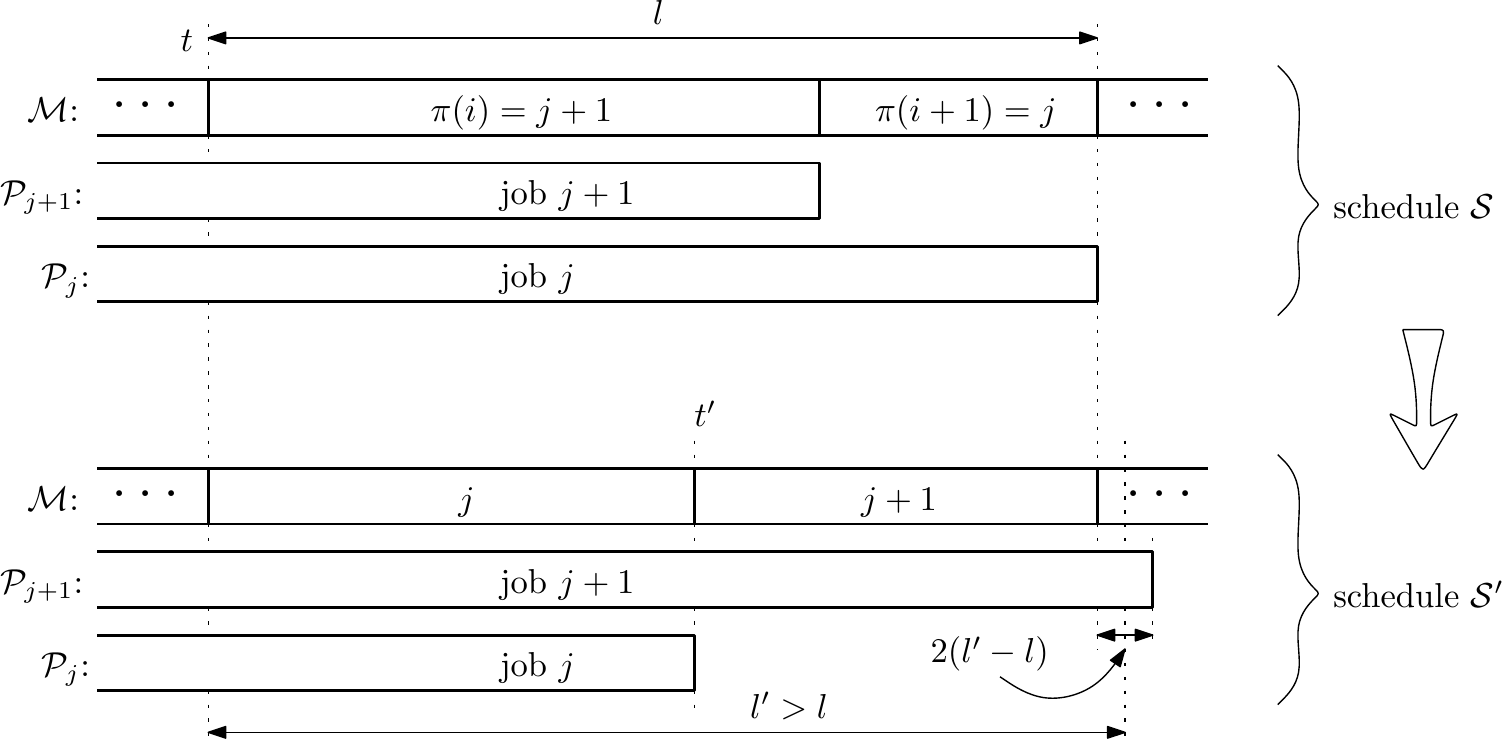}
\caption{Transition from $\cS$ to $\cS'$ in proof of Lemma~\ref{lem:antithetical}}
\label{fig:exchange}
\end{center}
\end{figure}

Consider the intervals in which the two jobs $j$ and $j+1$ execute on $\Mshared$ in $\cS$ and suppose that $j+1$ starts at $t$ on $\Mshared$, $\startTime{\cS}{j+1}{\Mshared}=t$.
Since $\cS$ is synchronized, the length $l=\complTime{\cS}{j}{\Mshared}-\startTime{\cS}{j+1}{\Mshared}$ of this sequence on the shared processor equals
\begin{equation} \label{eq:ldef}
l= \left(\complTime{\cS}{j}{\Mshared} - \startTime{\cS}{j}{\Mshared}\right) + \left(\complTime{\cS}{j+1}{\Mshared} - \startTime{\cS}{j+1}{\Mshared}\right) = \frac{p_{j}-t}{2}+\frac{p_{j+1}-t}{4},
\end{equation}
and its contribution $x$ to the value of objective function (total weighted overlap) equals
\begin{equation}\label{w3}
x=\left(\complTime{\cS}{j+1}{\Mshared} - \startTime{\cS}{j+1}{\Mshared}\right)w_{j+1}+\left(\complTime{\cS}{j}{\Mshared} - \startTime{\cS}{j}{\Mshared}\right)w_j=\frac{p_{j+1}-t}{2}\left(w_{j+1}-\frac{w_j}{2}\right)+\frac{p_{j}-t}{2}w_{j}.
\end{equation}
Thus, we can express the total weighted overlap of $\cS$ as follows:
\begin{equation} \label{eq:tctS}
\tct{\cS}=c+x \quad\textup{for some } c\in\reals.
\end{equation}

Before we formally define $\cS'$, we analyze the impact the reversed order of the two jobs $j$ and $j+1$ in the interval $(t,t+l)$ has on $\cS$ and its objective function. Suppose for the time being that
$j$ starts at $t$ on $\Mshared$ and is followed by the job $j+1$ and that either job is executed in such a way that it  completes on both $\Mshared$ and its private processors at the same time.  Then the length $l'$ of the interval $(t,t+l')$ occupied by these two jobs on the shared processor equals
$$l'=\frac{p_{j+1}-t}{2}+\frac{p_j-t}{4}$$
and its contribution $x'$ to the value of objective function equals
\begin{equation} \label{w1}
x'=\frac{p_j-t}{2}\left(w_j-\frac{w_{j+1}}{2}\right)+\frac{p_{j+1}-t}{2}w_{j+1}.
\end{equation}

Clearly, $l'>l$ for $p_{j+1}>p_j$ by
\begin{equation} \label{eq:lldiff}
l'-l=\frac{p_{j+1}-p_{j}}{4}.
\end{equation}
The job $j$ completes at
\begin{equation} \label{eq:tprimedef}
t'=t+\frac{p_j-t}{2}<t+\frac{p_j-t}{2}+\frac{p_{j+1}-t}{4}=t+l
\end{equation}
after the exchange. We construct the schedule $\cS'$ as follows: $\cS$ and $\cS'$ are identical in time intervals $[0,t)$ and $(t+l,+\infty)$,
the job $j$ executes in time interval
\begin{equation} \label{eq:ttinterval}
(t,t')=(t,t+(p_j-t)/2)
\end{equation}
in $\cS'$ and the job $j+1$ executes in time interval
\[(t',t+l)=(t+(p_j-t)/2,t+l)\]
on processor $\Mshared$ in $\cS'$.
Note that $j$ finishes at the same time on its private and shared processor in $\cS'$ while the job $j+1$ does not have this property. Since $p_{j+1}>p_j$, $j+1$ completes $(p_{j+1}-p_{j})/4$ units later on its private processor. Thus $\cS'$ is not synchronized; see also Figure~\ref{fig:exchange}.
The total weighted overlap of $\cS'$ is then by~\eqref{eq:ldef}, \eqref{eq:tprimedef} and~\eqref{eq:ttinterval}:
\begin{eqnarray} \label{eq:tctSprime}
\begin{aligned}
\tct{\cS'} & = & c + (t'-t)w_j+(t+l-t')w_{j+1}, \\
           & = & c + \frac{p_j-t}{2}w_j + \frac{p_{j+1}-t}{4}w_{j+1}
\end{aligned}
\end{eqnarray}
where $c$ is defined in \eqref{eq:tctS}.
By \eqref{eq:tctS} and \eqref{eq:tctSprime} we obtain that the difference (in total weighted overlaps) between $\cS'$ and $\cS$ is
\begin{equation}
\tct{\cS'}-\tct{\cS} = \frac{p_{j+1}-t}{4}(w_j-w_{j+1}).
\end{equation}
Since $p_{j+1}-t>0$ (this holds since the job $j+1$ appears on the shared processor in $\cS$) and $w_{j}\geq w_{j+1}$, we have that $\tct{\cS'}-\tct{\cS}\geq 0$.
Note that if $w_j$ is strictly greater than $w_{j+1}$, then we obtain the desired contradiction with the optimality of $\cS$.
However, if $w_j=w_{j+1}$, or in other words $\tct{\cS}=\tct{\cS'}$, then we need to use different arguments to arrive at a contradiction.

\medskip
To that end denote
\[q=\sum_{i'=i+2}^k \frac{w_{\pi(i')}}{2^{i'-i-1}}.\]
We show that
\begin{equation} \label{eq:wjq}
w_{j+1}\geq q.
\end{equation}
Suppose for a contradiction that $w_{j+1}<q$.
To obtain a contradiction with this assumption we will convert $\cS'$ into a schedule $\cS''$  with strictly greater total weighted overlap, which will contradict the optimality of the original schedule $\cS$.
This conversion is described in the next two paragraphs and depicted in Figure~\ref{fig:eliminate}.
\begin{figure}[htb]
\begin{center}
\includegraphics[scale=0.8]{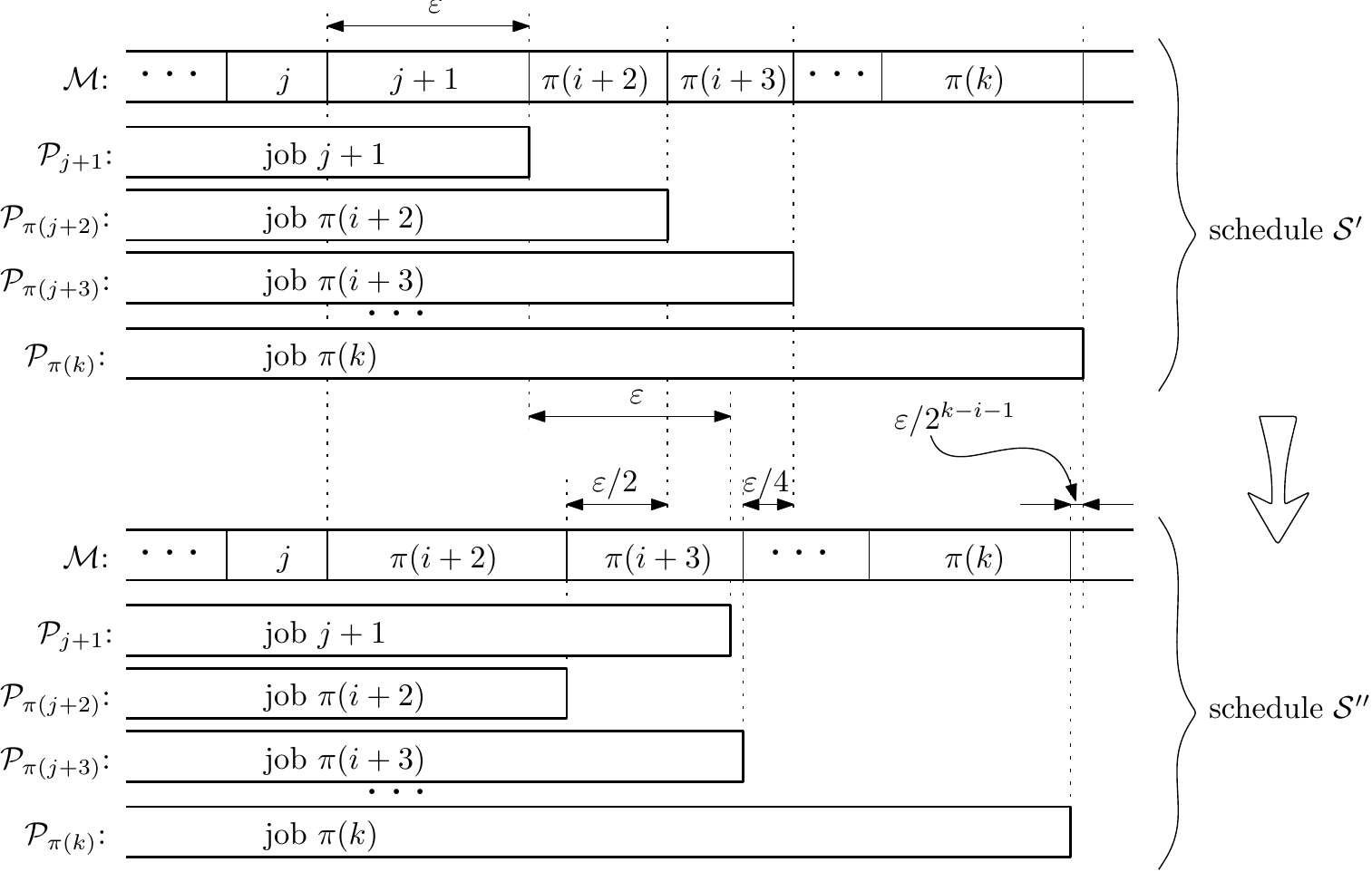}
\caption{Transition from $\cS'$ to $\cS''$ in proof of Lemma~\ref{lem:antithetical} when proving \eqref{eq:wjq}}
\label{fig:eliminate}
\end{center}
\end{figure}
Let
\[\varepsilon=\complTime{\cS'}{j+1}{\Mshared}-\startTime{\cS'}{j+1}{\Mshared}.\]
Observe that $\varepsilon>0$ by~\eqref{eq:lldiff}.
By assumption $\cS$ is synchronized and $\cS$ and $\cS'$ are identical on $\Mshared$ in time intervals $[\complTime{\cS'}{j+1}{\Mshared},+\infty)$. Thus 
\[\complTime{\cS'}{i'}{\Mshared}=\complTime{\cS}{i'}{\Mshared}=\complTime{\cS}{i'}{\Mpriv}=\complTime{\cS'}{i'}{\Mpriv}\]
for each job $i'$ that appears on $\Mshared$ and completes in $\cS'$ later than the job $j+1$.
The schedule $\cS''$ is defined as follows.
Let, $\cS''$ and $\cS'$ be identical on $\Mshared$ in time interval
\[\left[0,\complTime{\cS'}{j+1}{\Mshared}-\varepsilon\right)=\left[0,\startTime{\cS'}{j+1}{\Mshared}\right).\]
Then, the job $j+1$ is not present on $\Mshared$ in $\cS'$. It executes only on $\Mpriv_{j+1}$ in $\cS''$.
Finally, for each job $\pi(i')$, $i'\in\{i+2,\ldots,k\}$, we set:
\[\startTime{\cS''}{\pi(i')}{\Mshared}=\startTime{\cS'}{\pi(i')}{\Mshared}-\frac{\varepsilon}{2^{i'-i-2}},\]
\[\complTime{\cS''}{\pi(i')}{\Mshared}=\complTime{\cS'}{\pi(i')}{\Mshared}-\frac{\varepsilon}{2^{i'-i-1}},\]
\[\complTime{\cS''}{\pi(i')}{\Mpriv}=\complTime{\cS'}{\pi(i')}{\Mpriv}-\frac{\varepsilon}{2^{i'-i-1}}.\]
Both $\cS''$ and $\cS'$ are the same on $\Mpriv_{i'}$ for each $i'\in\jobs\setminus\{j+1,\pi(i+2),\ldots,\pi(k)\}$, i.e., on each processor not specified by the formulas above.

Clearly, $\cS''$ is feasible and synchronized.
To compare its total weighted overlap to that of $\cS'$, note that on the one hand the value of $\cS''$  decreases by $\varepsilon w_{j+1}$ in comparison to $\cS'$ since the job $j+1$ does not appear on $\Mshared$ in $\cS''$, on the other hand it increases since the subintervals with the jobs that follow $j+1$ on $\Mshared$ get longer due to $j+1$ disappearance from $\Mshared$. Hence
\[\tct{\cS''}=\tct{\cS'}-\varepsilon w_{j}+\varepsilon\sum_{i'=i+2}^k \frac{w_{i'}}{2^{i'-i-1}}=\tct{\cS'}+\varepsilon(q-w_{j})>0\]
because $\varepsilon>0$ and $w_{j}<q$ by assumption.
Thus, we obtain a contradiction with the optimality of $\cS$ (recall that $\tct{\cS}\leq\tct{\cS'}$).

\medskip
Having proved~\eqref{eq:wjq}, we complete the proof of case~\eqref{eq:contradictp} by performing another transformation of the schedule $\cS'$ to a new schedule $\cS''$.
We intend this transformation to also apply to the case of~\eqref{eq:contradictA}. Hence we will conduct the remaining part of the proof in such a way that we complete the proof of case~\eqref{eq:contradictp} and carry out a complete proof of case~\eqref{eq:contradictA} all at one time.
Since, we will obtain a contradiction in both cases, the proof of both~\eqref{eq:AisJ} and~\eqref{eq:pi} will be completed.
Thus, what we need is to define the $\cS'$ and the job $j+1$ in order to include the case~\eqref{eq:contradictA} into our proof:
for~\eqref{eq:contradictA} we take $\cS':=\cS$, $j=\pi(i+1)$ and $j+1=\bar{j}$.
(Recall that $\bar{j}$ is defined in~\eqref{eq:contradictA} to be a job that does not appear on $\Mshared$ and has processing time that is greater or equal to that of $\pi(i+1)$.)
It will follow from the construction below that the transformation works for both $\cS'$, $j$ and $j+1$ used in the earlier part of the proof of case~\eqref{eq:contradictp} and  for the new $\cS$, $\pi(i)$ and $\bar{j}$ in~\eqref{eq:contradictA}.
Observe that, informally speaking, the only difference in both cases is that in case~\eqref{eq:contradictp} the job $j+1$ is present on $\Mshared$ and completes on $\Mpriv$ later than on $\Mshared$ while in case~\eqref{eq:contradictA} the job $j+1$ is not present on $\Mshared$.
\begin{figure}[htb]
\begin{center}
\includegraphics[scale=0.8]{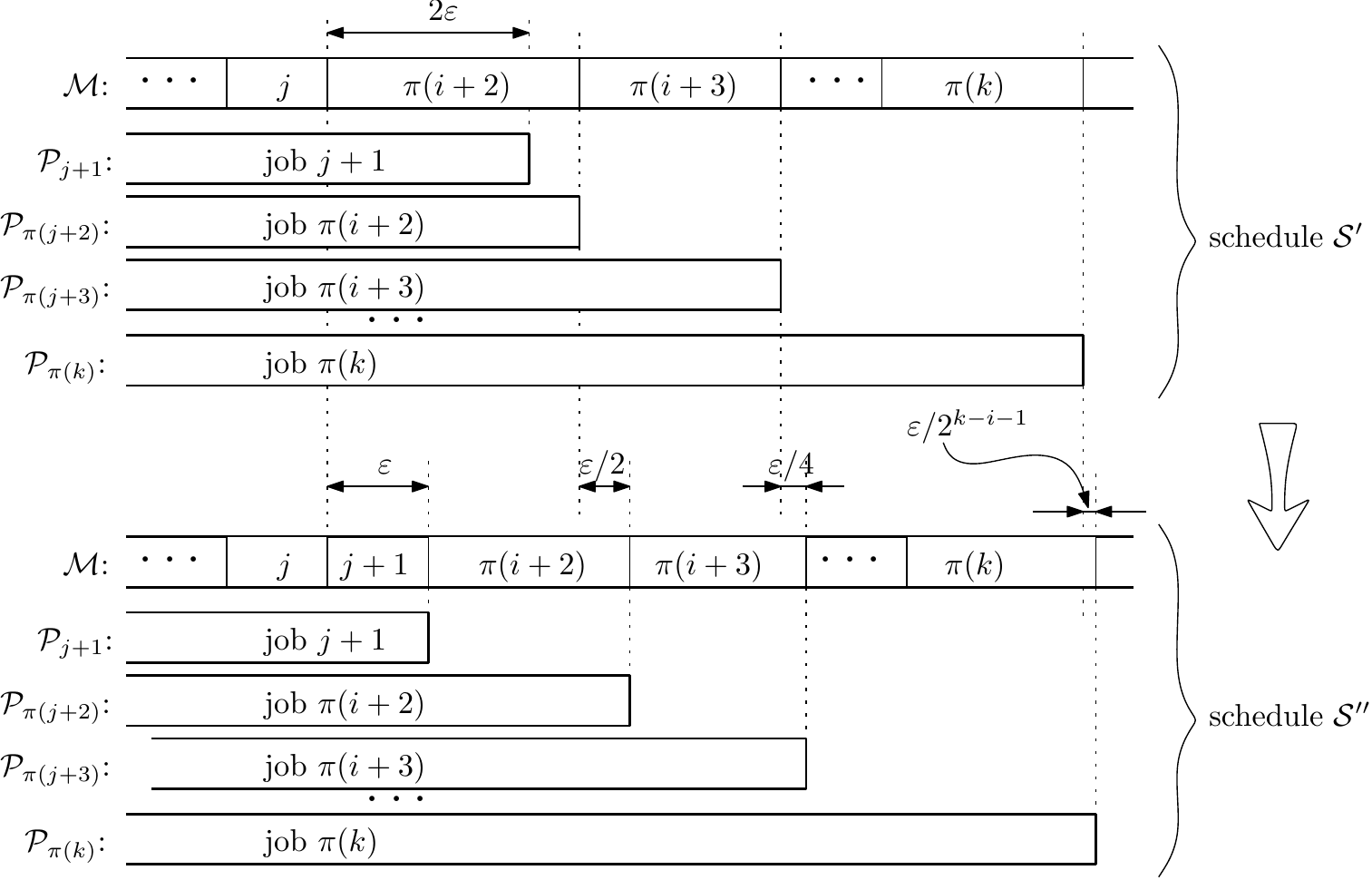}
\caption{Transition from $\cS'$ to $\cS''$ in proof of Lemma~\ref{lem:antithetical} for the completion of the proof of~\eqref{eq:AisJ} and~\eqref{eq:pi}; in this particular example we have that the job $j+1$ does not appear on $\Mshared$, thus presenting the case when we take $j+1=\bar{j}$}
\label{fig:introduce}
\end{center}
\end{figure}
We have in particular
\begin{equation} \label{eq:complLarge}
\complTime{\cS'}{j+1}{\Mpriv}>\complTime{\cS'}{j+1}{\Mshared}.
\end{equation}

Let
\[\varepsilon = \frac{1}{2}\left(\complTime{\cS'}{j+1}{\Mpriv}-\startTime{\cS'}{\pi(i+2)}{\Mshared}\right).\]
We now obtain a schedule $\cS''$ as follows.
First, $\cS''$ and $\cS'$ are identical on $\Mshared$ in time interval $[0,\complTime{\cS'}{j}{\Mshared})$,
\[\startTime{\cS''}{\pi(i')}{\Mshared}=\startTime{\cS'}{\pi(i')}{\Mshared}+\frac{\varepsilon}{2^{i'-i-2}}\]
and
\[\complTime{\cS''}{\pi(i')}{\Mshared}=\complTime{\cS''}{\pi(i')}{\Mpriv}=\complTime{\cS'}{\pi(i')}{\Mshared}+\frac{\varepsilon}{2^{i'-i-1}}\]
for each $i'\in\{i+2,\ldots,k\}$.
Informally, the start of the $\pi(i')$-th job is postponed on $\Mshared$ by $\varepsilon/2^{i'-i-2}$ and its completion time is set up in such a way that this job completes on both $\Mshared$ and $\Mpriv$ at the same time.
Then, $\startTime{\cS''}{j+1}{\Mshared}=\complTime{\cS''}{j}{\Mshared}$, $\complTime{\cS''}{j+1}{\Mshared}=\startTime{\cS''}{\pi(i+2)}{\Mshared}$ and the remaining part of $j+1$ executes on $\Mpriv_{j+1}$.
Finally, the execution of the remaining jobs (those for which completion times on their private processors have not been defined above) on their private processors remains the same in $\cS''$ as in $\cS'$.

We argue that the schedule $\cS''$ is feasible.
Clearly, the new time intervals in which job $j+1$ executes on $\Mshared$ and $\Mpriv_{j+1}$ are correct.
Moreover, $j+1$ completes on both processors at the same time.
For the jobs that follow $j+1$ on $\Mshared$, each of them also completes at the same time on $\Mshared$ and its private processor. Therefore these jobs are executed correctly because by the choice of $i$, they are ordered on $\Mshared$ (in $\cS'$ and also on $\cS''$) with non-decreasing values of their processing times:
\[p_{j+1}\leq p_{\pi(i+2)}\leq\cdots\leq p_{\pi(k)}.\]
This implies that $\cS''$ is feasible and synchronized.

From the construction we obtain that
\begin{equation} \label{eq:tctsComp}
\tct{\cS''}=\tct{\cS'}+\varepsilon w_{j+1}-\varepsilon\sum_{i'=i+2}^k \frac{w_{\pi(i')}}{2^{i'-i-1}}.
\end{equation}
Therefore, in case~\eqref{eq:contradictp}, we obtain by~\eqref{eq:wjq} that $\tct{\cS''}\geq\tct{\cS'}$.
In particular, if $\tct{\cS''}>\tct{\cS'}$, then we have immediately a contradiction with the optimality of $\cS$ since $\tct{\cS'}\geq\tct{\cS}$.
On the other hand, if $\tct{\cS''}=\tct{\cS'}$, then the contradiction comes from the selection of $\cS$ to be a schedule that minimizes the violation point $i$ (both $\cS''$ and $\cS$ have the same number of jobs present on the shared processor).

To consider case~\eqref{eq:contradictA}, observe that since $i$ is the violation point of $\cS$, $w_{\pi(i+2)}\geq w_{\pi(i')}$ for each $i'\in\{i+2,\ldots,k\}$.
Therefore,
\[\sum_{i'=i+2}^k \frac{w_{\pi(i')}}{2^{i'-i-1}} \leq w_{\pi(i+2)}\sum_{i'=i+2}^k \frac{1}{2^{i'-i-1}}< w_{\pi(i+2)}.\]
Hence by~\eqref{eq:tctsComp}, $\tct{\cS''}>\tct{\cS'}$ because $w_{j+1}\geq w_{\pi(i+2)}$.
The latter inequality comes from the fact that the instance is antithetical and $p_{j+1}\leq p_{\pi(i+2)}$, which comes from the maximality of $i$ with respect to the conditions~\eqref{eq:contradictp} and~\eqref{eq:contradictA}.
Thus, we again have a contradiction with the choice of $\cS$, which completes the proof of the lemma.
\end{proof}

\section{A 1/2-approximation Algorithm} \label{sec:approx}
\newcommand{\cSkey}{\cS_{\textup{key}}}

Let $0=q_0<q_{1}<\cdots<q_{\ell}$ and $u_{1},\ldots,u_{\ell}\geq 0$ for some $\ell \geq 1$.
An \textit{envelope} \ for $q_{1},\ldots,q_{\ell}$ and $u_{1},\ldots,u_{\ell}$ is a step-function of non-negative $x$ defined as follows
$$e(q_{1},\ldots,q_{\ell},u_{1},\ldots,u_{\ell},x)=\left\{ 
\begin{array}{cc}
u_{1} & \textup{if }q_{0}\leq x\leq q_{1} \\ 
u_{2} & \textup{if }q_{1}<x\leq q_{2} \\  
&\ldots  \\
u_{\ell} & \textup{if }q_{k-1}<x\leq q_{\ell} \\ 
0 & \textup{if }q_{\ell}<x.%
\end{array}%
\right.$$
The \textit{area} of the envelope $e$ is
\[
\sum_{i=1}^{\ell}u_{i}(q_{i}-q_{i-1}).
\]

\medskip
Let $\jobs$ be a set of jobs.
Without loss of generality we assume $p_{1}\leq \cdots\leq p_{n}$, for any tie we assume the jobs that are tied are ordered in ascending order of their weights, i.e. the heaviest tied job comes last in the tie.
A sequence of jobs $i_1,\ldots,i_{\ell}$ for some $\ell\geq 1$, where $1\leq i_{1}<\cdots<i_{\ell}\leq n$, is called a \emph{key sequence} for $\jobs$ if it satisfies the following conditions:
\begin{enumerate} [label={\normalfont{(\roman*)}}]
\item\label{it:key0} $i_{\ell}=n$,
\item\label{it:key1} $w_{i_{1}}>\cdots>w_{i_{\ell}}$,
\item\label{it:key2} $w_{k}\leq w_{i_{j}}$ for each $k\in I_{i_{j}}=\{i_{j-1}+1,\ldots,i_{j}\}$ and $j\in\{1,\ldots,\ell \}$, where $i_0=0$.
\end{enumerate}
Clearly $p_{i_{1}}<\cdots<p_{i_{\ell}}$, thus $u(p_{i_{1}},\ldots,p_{i_{\ell}},w_{i_{1}},\ldots,w_{i_{\ell}},x)$ is an envelope; we refer to it as the \textit{upper envelope} for $\jobs$.
Let $u^{\ast }$ be the area of the upper envelope for $\jobs$.

Note that the key sequence always exists.
This follows from the fact that it can be constructed `greedily' by starting with picking the last job of the sequence (see~\ref{it:key0}) and then iteratively selecting the predecessor of the previously selected job so that the predecessor has strictly bigger weight (see~\ref{it:key1}) and satisfies the condition \ref{it:key2}.
Also, the key sequence is unique by the same argument.

We have the following simple observation.
\begin{claim}
For each $k \in\{1,\ldots,\ell\}$, $w_{i_{k}}=\max\{w_{j}\st i_{k}\leq j\leq n\}$. 
\qed
\end{claim}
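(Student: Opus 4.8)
The plan is to lean entirely on the structural properties of a key sequence, namely that the sets $I_{i_1},\ldots,I_{i_\ell}$ partition $\{1,\ldots,n\}$, together with conditions~\ref{it:key1} and~\ref{it:key2}. First I would fix $k\in\{1,\ldots,\ell\}$ and take an arbitrary index $j$ with $i_k\le j\le n$. Since $i_0=0<i_1<\cdots<i_\ell=n$, the index $j$ belongs to exactly one of the intervals $I_{i_m}=\{i_{m-1}+1,\ldots,i_m\}$; and since $j\ge i_k>i_{k-1}$, this membership forces $m\ge k$.

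Next I would invoke condition~\ref{it:key2} for the interval $I_{i_m}$ to get $w_j\le w_{i_m}$, and then condition~\ref{it:key1}, which gives $w_{i_k}\ge w_{i_{k+1}}\ge\cdots\ge w_{i_\ell}$, so that $m\ge k$ yields $w_{i_m}\le w_{i_k}$. Chaining the two inequalities gives $w_j\le w_{i_k}$ for every $j\in\{i_k,\ldots,n\}$. Because the index $i_k$ itself lies in this range, the maximum of $w_j$ over $i_k\le j\le n$ is attained at $j=i_k$, which is precisely the assertion of the claim.

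I do not expect any genuine obstacle: the entire argument is a one-line consequence of the partition and the monotonicity and domination properties built into the definition of a key sequence. The only point that deserves a careful sentence is the partition step --- that every index in $\{1,\ldots,n\}$ lies in a unique $I_{i_m}$ and that $j\ge i_k$ implies $m\ge k$ --- but this is immediate from $i_0=0$ and the strict monotonicity $i_1<\cdots<i_\ell$, so the proof will be short.
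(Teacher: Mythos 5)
Your argument is correct: the intervals $I_{i_1},\ldots,I_{i_\ell}$ do partition $\{1,\ldots,n\}$, condition~(iii) bounds $w_j$ by $w_{i_m}$ for the unique block containing $j$, and condition~(ii) chains this down to $w_{i_k}$. The paper states this claim without proof (it is marked as a simple observation), and your write-up is exactly the routine verification it leaves implicit.
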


The key sequence for $\jobs$ defines a synchronized schedule $\cSkey$ for $\jobs$ with the set of jobs executed on the shared processor being $\jobs_{\textup{key}}=\{i_{1},\ldots,i_{\ell}\}$ and the permutation of the jobs on the processor being $\pi (j)=i_{j}$ for $j\in\{1,\ldots,\ell\}$.
The jobs in $\jobs\setminus \jobs_{\textup{key}}$ are executed on their private processors only.
Following our notation introduced in Section \ref{sec:preliminaries}, we get $\cSkey=(i_1,\ldots,i_{\ell})$.
We have the following lemma.

\begin{lemma} \label{lem:envelopeUpper}
For the schedule $\cSkey$ it holds $2\tct{\cSkey}\geq u^{\ast }$.
\end{lemma}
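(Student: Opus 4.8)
The plan is to relate the total weighted overlap of $\cSkey$ directly to the area $u^\ast$ of the upper envelope, term by term across the blocks $I_{i_1},\ldots,I_{i_\ell}$ of the key sequence. First I would write out explicitly what $\tct{\cSkey}$ is: since $\cSkey=(i_1,\ldots,i_\ell)$ is synchronized with no gaps, the overlap of job $i_j$ on the shared processor is $t_{i_j}=\complTime{\cSkey}{i_j}{\Mshared}-\startTime{\cSkey}{i_j}{\Mshared}$, and using the synchronization recursion $\startTime{\cSkey}{i_j}{\Mshared}=\complTime{\cSkey}{i_{j-1}}{\Mshared}$ together with $\complTime{\cSkey}{i_j}{\Mshared}=(p_{i_j}+\startTime{\cSkey}{i_j}{\Mshared})/2$ one gets a closed form for each $t_{i_j}$. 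Then $\tct{\cSkey}=\sum_{j=1}^\ell t_{i_j}w_{i_j}$.

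Next I would compute the envelope area. By definition $u^\ast=\sum_{j=1}^\ell w_{i_j}(p_{i_j}-p_{i_{j-1}})$ where $p_{i_0}=0$ (careful: the envelope is defined over breakpoints $p_{i_1}<\cdots<p_{i_\ell}$ with $q_0=0$, so the $j$-th slab has height $w_{i_j}$ and width $p_{i_j}-p_{i_{j-1}}$). The key inequality I want is $2t_{i_j}\ge p_{i_j}-p_{i_{j-1}}$ for each $j$, because then multiplying by $w_{i_j}\ge 0$ and summing gives $2\tct{\cSkey}=\sum 2t_{i_j}w_{i_j}\ge\sum w_{i_j}(p_{i_j}-p_{i_{j-1}})=u^\ast$. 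So the heart of the argument is the per-job claim $2t_{i_j}\ge p_{i_j}-p_{i_{j-1}}$. Equivalently, writing $C_j=\complTime{\cSkey}{i_j}{\Mshared}$, this says $2(C_j-C_{j-1})\ge p_{i_j}-p_{i_{j-1}}$, and since $C_j=(p_{i_j}+C_{j-1})/2$ we have $2(C_j-C_{j-1})=p_{i_j}-C_{j-1}$, so the claim reduces to $C_{j-1}\le p_{i_{j-1}}$ — that is, the shared-processor completion time of the $(j-1)$-st job on $\Mshared$ is at most its processing time. But in a synchronized schedule $C_{j-1}=\complTime{\cSkey}{i_{j-1}}{\Mpriv}$, and trivially $\complTime{\cSkey}{i_{j-1}}{\Mpriv}\le p_{i_{j-1}}$ since a job's private-processor completion time never exceeds its full processing time (equality only when it never touches $\Mshared$). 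For $j=1$ we need $C_0\le p_{i_0}=0$, which holds with $C_0=0$.

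So the proof structure is: (1) state the synchronized-schedule recursion for $\cSkey$ and derive $2t_{i_j}w_{i_j}$ in terms of $C_j,C_{j-1}$; (2) observe $2(C_j-C_{j-1})=p_{i_j}-C_{j-1}$; (3) observe $C_{j-1}\le p_{i_{j-1}}$ (with $p_{i_0}:=0$); (4) conclude $2t_{i_j}\ge p_{i_j}-p_{i_{j-1}}$, multiply by $w_{i_j}\ge0$, sum over $j$, and identify the right-hand side as $u^\ast$.

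I expect the only subtlety — not really an obstacle — to be bookkeeping the indices and the convention $p_{i_0}=0=C_0$ so that the first slab of the envelope and the first job of $\cSkey$ are handled uniformly; and making explicit why $C_{j-1}\le p_{i_{j-1}}$, namely that $C_{j-1}=\complTime{\cSkey}{i_{j-1}}{\Mpriv}$ by synchronization and a private completion time is at most the processing time. No properties of the key sequence beyond $p_{i_1}<\cdots<p_{i_\ell}$ (so the envelope is well-defined) and $w_{i_j}\ge0$ are actually needed for this direction; conditions \ref{it:key1}--\ref{it:key2} will matter only later for the matching lower bound against the optimum, not here.
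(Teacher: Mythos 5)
Your argument is correct and is essentially the paper's own proof: both reduce the claim to the per-job inequality $2\bigl(\complTime{\cSkey}{i_k}{\Mshared}-\startTime{\cSkey}{i_k}{\Mshared}\bigr)\geq p_{i_k}-p_{i_{k-1}}$, established via the synchronization recursion and the observation that $\startTime{\cSkey}{i_k}{\Mshared}=\complTime{\cSkey}{i_{k-1}}{\Mshared}\leq p_{i_{k-1}}$, and then sum against the weights to recover $u^{\ast}$. Your extra remark that conditions \ref{it:key1}--\ref{it:key2} are not needed for this direction is accurate.
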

\begin{proof}
We argue that for each $k\in\{1,\ldots,\ell\}$,
\begin{equation} \label{eq:times2}
w_{i_{k}}\left(\complTime{\cSkey}{i_{k}}{\Mshared}-\startTime{\cSkey}{i_{k}}{\Mshared}\right) \geq w_{i_{k}}\frac{p_{i_{k}} - p_{i_{k-1}}}{2},
\end{equation}
where $p_{i_0}=0$.
Note that $\startTime{\cSkey}{i_{k}}{\Mshared}\leq p_{i_{k-1}}$ for each $k\in\{1,\ldots,\ell\}$.
Thus,
\[\complTime{\cSkey}{i_{k}}{\Mshared}-\startTime{\cSkey}{i_{k}}{\Mshared}= \frac{\startTime{\cSkey}{i_{k}}{\Mshared}+p_{i_k}}{2} -\startTime{\cSkey}{i_{k}}{\Mshared} \geq \frac{p_{i_k}-p_{i_{k-1}}}{2}\]
for each $k\in\{1,\ldots,\ell\}$, which proves \eqref{eq:times2}.

By \eqref{eq:times2},
\[\tct{\cSkey} = \sum_{k=1}^{\ell} w_{i_{k}}\left(\complTime{\cSkey}{i_{k}}{\Mshared}-\startTime{\cSkey}{i_{k}}{\Mshared}\right) \geq \sum_{k=1}^{\ell} w_{i_{k}}\frac{p_{i_{k}} - p_{i_{k-1}}}{2} = \frac{u^{\ast}}{2}.\]
\end{proof}

We now prove that the area $u^{\ast }$ of the upper envelope for $\jobs$ is an upper bound on the value of optimal solution for $\jobs$.
By Theorem~\ref{thm:synchronized}, there exists an optimal synchronized schedule $\cS_{\textup{opt}}$ for $\jobs$.
Let $\jobs_{\textup{opt}}\subseteq\jobs$ be the set of jobs that appear on the shared processor in $\cS_{\textup{opt}}$ and let $\pi$ be a permutation of jobs in $\jobs_{\textup{opt}}$ in $\cS_{\textup{opt}}$.
Thus, we have $\cS_{\textup{opt}}=(\pi(1),\ldots,\pi(|\jobs_{\textup{opt}}|))$.
It holds $0<\complTime{\cS_{\textup{opt}}}{\pi(1)}{\Mshared} < \cdots < \complTime{\cS_{\textup{opt}}}{\pi(|\jobs_{\textup{opt}}|)}{\Mshared} <p_{n}$ and therefore
\[
e\left(\complTime{\cS_{\textup{opt}}}{\pi(1)}{\Mshared},\ldots,\complTime{\cS_{\textup{opt}}}{\pi(|\jobs_{\textup{opt}}|)}{\Mshared},w_{\pi (1)},\ldots,w_{\pi
(|\jobs_{\textup{opt}}|)},x\right).
\]
is an envelope for $\complTime{\cS_{\textup{opt}}}{\pi(1)}{\Mshared}, \ldots, \complTime{\cS_{\textup{opt}}}{\pi(|\jobs_{\textup{opt}}|)}{\Mshared}$ and $w_{\pi
(1)},\ldots,w_{\pi (|\jobs_{\textup{opt}}|)}$.
Let the area of this envelope be $e^{\ast }$.
We have the following  key result.

\begin{lemma} \label{lem:envelopeLower}
It holds $e^{\ast }\leq u^{\ast }$.
\end{lemma}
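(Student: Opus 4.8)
The plan is to show that the envelope determined by $\cS_{\textup{opt}}$ is "dominated" by the upper envelope for $\jobs$ in the following pointwise sense: for every $x\in[0,p_n]$,
\[
e\!\left(\ldots,w_{\pi(1)},\ldots,w_{\pi(|\jobs_{\textup{opt}}|)},x\right)\ \leq\ u\!\left(p_{i_1},\ldots,p_{i_\ell},w_{i_1},\ldots,w_{i_\ell},x\right),
\]
which upon integrating both sides over $[0,p_n]$ gives $e^{\ast}\leq u^{\ast}$ immediately. So the whole lemma reduces to a pointwise domination claim between two step functions.

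First I would fix an arbitrary point $x$ with $0\le x< p_n$ (the value at $x=p_n$ is irrelevant to the area since both envelopes are constant on half-open intervals and we can take the left-continuous representative, or simply note a single point contributes nothing). If $x$ lies beyond the support of the $\cS_{\textup{opt}}$-envelope, the left side is $0$ and there is nothing to prove, so assume the left side equals $w_{\pi(r)}$ where $r$ is the smallest index with $x\le \complTime{\cS_{\textup{opt}}}{\pi(r)}{\Mshared}$. The key structural fact is that in the synchronized schedule $\cS_{\textup{opt}}$ we have $\complTime{\cS_{\textup{opt}}}{\pi(r)}{\Mshared}=\complTime{\cS_{\textup{opt}}}{\pi(r)}{\Mpriv}\le p_{\pi(r)}$ (a job's completion on the shared processor cannot exceed its processing time, since it also runs on its private processor up to that same moment and its total work is $p_{\pi(r)}$). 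Hence $x\le p_{\pi(r)}$. Now on the upper-envelope side, let $k$ be the smallest index in the key sequence with $x\le p_{i_k}$ (such $k$ exists because $i_\ell=n$ and $x\le p_n$); then the right side equals $w_{i_k}$. Since $k$ is minimal, $p_{i_{k-1}}<x\le p_{\pi(r)}$, so $p_{\pi(r)}>p_{i_{k-1}}$, which forces $\pi(r)\ge i_{k-1}+1$, i.e. $\pi(r)\in I_{i_k}\cup I_{i_{k+1}}\cup\cdots$; in any case $\pi(r)\ge i_{k-1}+1$.

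The main obstacle — and the crux of the argument — is to convert this index inequality into the weight inequality $w_{\pi(r)}\le w_{i_k}$. Here I would invoke the Claim proved just above the lemma: $w_{i_k}=\max\{w_j : i_k\le j\le n\}$, and more to the point the condition \ref{it:key2} in the definition of the key sequence, which says $w_m\le w_{i_k}$ for every $m$ with $i_{k-1}<m\le i_k$. If $\pi(r)\le i_k$ this is exactly \ref{it:key2} and we are done. If instead $\pi(r)>i_k$, then $\pi(r)\in I_{i_{k'}}$ for some $k'>k$ in the key sequence, so $w_{\pi(r)}\le w_{i_{k'}}$ by \ref{it:key2}, and $w_{i_{k'}}<w_{i_k}$ by the strict monotonicity \ref{it:key1}; either way $w_{\pi(r)}\le w_{i_k}$, establishing the pointwise domination at $x$. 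Since $x$ was arbitrary in $[0,p_n)$, integrating yields $e^{\ast}\le u^{\ast}$, completing the proof. The only subtlety to double-check is the boundary bookkeeping of the step functions (which interval endpoints are open vs. closed), but since we only need the inequality almost everywhere for the area comparison, this causes no real difficulty.
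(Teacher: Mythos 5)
Your proposal is correct and follows the same overall strategy as the paper: both reduce the area inequality $e^{\ast}\leq u^{\ast}$ to pointwise domination of the optimal schedule's envelope by the upper envelope for $\jobs$. Where you differ is in how the crucial weight inequality is verified. The paper assigns to each job $\pi(j)$ of $\cS_{\textup{opt}}$ the minimal key-sequence index $\tau(j)$ with $\complTime{\cS_{\textup{opt}}}{\pi(j)}{\Mshared}\leq p_{i_{\tau(j)}}$ and proves $w_{\pi(j)}\leq w_{i_{\tau(j)}}$ by contradiction, splitting into the cases $p_{\pi(j)}\leq p_{i_{\tau(j)}}$ and $p_{\pi(j)}>p_{i_{\tau(j)}}$ and invoking the strict monotonicity of weights along the key sequence together with the interval-maximality condition in each case. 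You instead fix $x$, use synchronization to get $x\leq\complTime{\cS_{\textup{opt}}}{\pi(r)}{\Mshared}\leq p_{\pi(r)}$, and combine this with the global sorting $p_{1}\leq\cdots\leq p_{n}$ to conclude $\pi(r)>i_{k-1}$, after which the two key-sequence conditions give $w_{\pi(r)}\leq w_{i_{k}}$ directly (splitting only on whether $\pi(r)\leq i_{k}$ or $\pi(r)>i_{k}$). Your version is a direct argument rather than a two-case contradiction, and it makes explicit the fact $\complTime{\cS_{\textup{opt}}}{\pi(r)}{\Mshared}\leq p_{\pi(r)}$ that the paper's case analysis uses only implicitly through the minimality of $\tau(j)$; what it buys is a slightly cleaner and more self-contained verification, at the cost of tracking the running index $r$ pointwise in $x$ rather than job by job. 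Both arguments are sound, and the endpoint bookkeeping you flag is indeed immaterial to the area comparison.
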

\begin{proof}
Observe that for each index $j\in\{1,\ldots,|\jobs_{\textup{opt}}|\}$ there exists $\tau(j)\in\{1,\ldots,n\}$ such that
\begin{equation} \label{eq:eLpp}
\complTime{\cS_{\textup{opt}}}{\pi(j)}{\Mshared} \leq p_{i_{\tau(j)}}.
\end{equation}
This follows from condition~\ref{it:key0} in definition of key sequence.
If, for a given $j$, there are several jobs $\tau(j)$ that satisfy the above, then take $\tau(j)$ to be the minimum one.

We argue that
\begin{equation} \label{eq:eLww}
w_{\pi(j)}\leq w_{i_{\tau(j)}}
\end{equation}
for each $j\in\{1,\ldots,|\jobs_{\textup{opt}}|\}$.
Suppose for a contradiction that \eqref{eq:eLww} does not hold.
We consider two cases.
In the first case let
\[p_{\pi(j)}\leq p_{i_{\tau(j)}}.\]
By condition~\ref{it:key2} in definition of the key sequence and the minimality of $\tau(j)$, $\pi(j)$ does not belong to the key sequence.
But then, $w_{\pi(j)} > w_{i_{\tau(j)}}$ implies that there is $t$ such that $p_{\pi(j)}<p_{i_t}<p_{i_{\tau(j)}}$, which contradicts the choice of $\tau(j)$.
In the second case let
\[p_{\pi(j)} > p_{i_{\tau(j)}}.\]
Take the minimum index $t$ such that $p_{i_t}\geq p_{\pi(j)}$.
By condition~\ref{it:key2} in definition of the key sequence, $w_{i_t}\geq w_{\pi(j)}$.
Since $w_{\pi(j)} > w_{i_{\tau(j)}}$, condition~\ref{it:key1} in definition of the key sequence implies that $i_{\tau(j)}$ does not belong to the key sequence --- a contradiction.
This completes the proof of~\eqref{eq:eLww}.

Since the upper envelope for $\jobs$ is non-increasing function in $x$, we obtain by~\eqref{eq:eLpp} and~\eqref{eq:eLww} that
\[e\left(\complTime{\cS_{\textup{opt}}}{\pi(1)}{\Mshared},\ldots,\complTime{\cS_{\textup{opt}}}{\pi(|\jobs_{\textup{opt}}|)}{\Mshared},w_{\pi (1)},\ldots,w_{\pi
(|\jobs_{\textup{opt}}|)},x\right) \leq u\left(p_{i_{1}},\ldots,p_{i_{\tau}},w_{i_{1}},\ldots,w_{i_{\tau}},x\right)\]
for each $x\geq 0$, which completes the proof.
\end{proof}

Since $\tct{\cS_{\textup{opt}}}=e^{\ast}$ and $\tct{\cSkey}\leq\tct{\cS_{\textup{opt}}}$, Lemmas~\ref{lem:envelopeUpper} and~\ref{lem:envelopeLower} give the following.
\begin{corollary} \label{cor:key}
It holds $e^{\ast }/2\leq \tct{\cSkey} \leq e^{\ast }$.
\qed
\end{corollary}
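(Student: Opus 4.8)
The plan is simply to chain together the inequalities already established; the corollary is an immediate consequence of Lemmas~\ref{lem:envelopeUpper} and~\ref{lem:envelopeLower} together with the optimality of $\cS_{\textup{opt}}$, so there is no genuine obstacle left to overcome.

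First I would record the identity $\tct{\cS_{\textup{opt}}}=e^{\ast}$. This is the one point that deserves a moment's care: one checks that in the synchronized schedule $\cS_{\textup{opt}}$ the width $\complTime{\cS_{\textup{opt}}}{\pi(j)}{\Mshared}-\complTime{\cS_{\textup{opt}}}{\pi(j-1)}{\Mshared}$ of the $j$-th step of the envelope (with the convention $\complTime{\cS_{\textup{opt}}}{\pi(0)}{\Mshared}=0$) equals the total overlap $t_{\pi(j)}$ of job $\pi(j)$. Indeed, since $\cS_{\textup{opt}}$ is synchronized we have $\complTime{\cS_{\textup{opt}}}{\pi(j)}{\Mshared}=\complTime{\cS_{\textup{opt}}}{\pi(j)}{\Mpriv}$ and no gaps on $\Mshared$, hence $\startTime{\cS_{\textup{opt}}}{\pi(j)}{\Mshared}=\complTime{\cS_{\textup{opt}}}{\pi(j-1)}{\Mshared}$; as job $\pi(j)$ runs on its private processor throughout $(0,\complTime{\cS_{\textup{opt}}}{\pi(j)}{\Mpriv})$, its overlap is exactly the interval $(\startTime{\cS_{\textup{opt}}}{\pi(j)}{\Mshared},\complTime{\cS_{\textup{opt}}}{\pi(j)}{\Mshared})$. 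Summing $w_{\pi(j)}t_{\pi(j)}$ over $j$ therefore gives both $\tct{\cS_{\textup{opt}}}$ and the area $e^{\ast}$ of the envelope.

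Next, since $\cSkey$ is a feasible schedule and $\cS_{\textup{opt}}$ is optimal, $\tct{\cSkey}\leq\tct{\cS_{\textup{opt}}}=e^{\ast}$, which is the right-hand inequality of the corollary. For the left-hand inequality, Lemma~\ref{lem:envelopeUpper} gives $2\tct{\cSkey}\geq u^{\ast}$ and Lemma~\ref{lem:envelopeLower} gives $u^{\ast}\geq e^{\ast}$; chaining these yields $2\tct{\cSkey}\geq e^{\ast}$, that is, $\tct{\cSkey}\geq e^{\ast}/2$. Combining the two bounds completes the proof. The substantive work having been done in Lemmas~\ref{lem:envelopeUpper} and~\ref{lem:envelopeLower}, the argument here is purely a matter of assembling the pieces.
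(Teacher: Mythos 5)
Your proposal is correct and follows exactly the route the paper takes: the corollary is obtained by chaining $2\tct{\cSkey}\geq u^{\ast}\geq e^{\ast}$ from Lemmas~\ref{lem:envelopeUpper} and~\ref{lem:envelopeLower} with $\tct{\cSkey}\leq\tct{\cS_{\textup{opt}}}=e^{\ast}$. Your verification of the identity $\tct{\cS_{\textup{opt}}}=e^{\ast}$ is a welcome extra detail that the paper merely asserts in passing.
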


\begin{theorem}
The key sequence for $\jobs$ provides a $1/2$-approximation solution to the problem~$\probShort$.
This sequence can be found in time $O(n\log n)$ for any set of jobs $\jobs$, where $n=|\jobs|$.
Moreover, the bound of $1/2$ is tight, i.e., for each $\varepsilon>0$ there exists a problem instance such that $\tct{\cSkey}<\left(\frac{1}{2}+\varepsilon\right)\tct{\cS_{\textup{opt}}}$.
\end{theorem}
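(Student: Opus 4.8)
The plan is to prove the three assertions of the theorem in turn: the approximation ratio, the running time, and the tightness of the bound. The first claim follows essentially immediately from the machinery already developed. By Theorem~\ref{thm:synchronized} there is an optimal synchronized schedule $\cS_{\textup{opt}}$, and as noted just before Corollary~\ref{cor:key} its total weighted overlap equals $e^{\ast}$, the area of the envelope it induces. Corollary~\ref{cor:key} then gives $e^{\ast}/2 \leq \tct{\cSkey} \leq e^{\ast} = \tct{\cS_{\textup{opt}}}$, so $\tct{\cSkey} \geq \frac{1}{2}\tct{\cS_{\textup{opt}}}$, which is exactly the assertion that $\cSkey$ is a $1/2$-approximation. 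This part is just bookkeeping.

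For the running time, I would describe the greedy construction of the key sequence explicitly and bound its cost. First sort the jobs so that $p_1 \leq \cdots \leq p_n$, breaking ties by ascending weight; this costs $O(n\log n)$. Then scan the jobs from $n$ down to $1$ maintaining the running maximum weight: by the Claim, $w_{i_k} = \max\{w_j : i_k \leq j \leq n\}$, so the key sequence indices are precisely those $j$ at which a new (strict) maximum of the suffix weights is attained (together with $i_\ell = n$), which can be identified in a single $O(n)$ pass. Hence the whole construction is $O(n\log n)$, dominated by the sort. I would also remark that once $\jobs_{\textup{key}} = \{i_1,\ldots,i_\ell\}$ and the permutation $\pi(j)=i_j$ are known, the schedule $\cSkey$ itself is determined by the iterative start/completion time formulas from Section~\ref{sec:preliminaries}, computable in $O(\ell) \subseteq O(n)$ additional time.

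For the tightness claim, the task is to exhibit, for each $\varepsilon>0$, an instance on which $\tct{\cSkey} < \left(\frac{1}{2}+\varepsilon\right)\tct{\cS_{\textup{opt}}}$. I would look for a small instance where the key sequence is forced to be very short — ideally a single job — while the optimal schedule packs many jobs onto the shared processor and nearly attains the envelope area. A natural candidate is a large number $m$ of jobs all of (nearly) equal processing time but with one job of strictly largest weight $W$ and the rest of small weight; the key sequence then consists essentially of just that heavy job (so $\tct{\cSkey}$ is roughly $W p/2$), whereas the unweighted-type optimal schedule that sequences jobs in ascending order of processing time gets overlap from every job, with the heavy one's overlap still close to $Wp/2$ but with the total envelope area being close to that as well; more promisingly, one makes the heavy job's processing time small and the others' large, so that in $\cSkey$ the single heavy job contributes only $\approx W p_{\min}/2$ while in $\cS_{\textup{opt}}$ the heavy job is scheduled first on the shared processor and its overlap is $\approx W p_{\min}/2$ too — so that configuration does not separate them. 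The cleaner route: take two job types, designed so the key sequence drops all but one heavy short job, and verify by direct computation using the synchronized start/completion recursion that the ratio $\tct{\cSkey}/\tct{\cS_{\textup{opt}}}$ tends to $1/2$ as the instance size grows.

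The main obstacle is the tightness construction: one must choose processing times and weights so that (i) the key sequence, as defined by conditions \ref{it:key0}--\ref{it:key2}, genuinely reduces to (essentially) one job, forcing $\tct{\cSkey}$ to be small, while (ii) there exists a feasible synchronized schedule whose weighted overlap is close to $2\tct{\cSkey}$, i.e. close to $u^{\ast}$. Since $\cSkey$ always attains at least half of $u^{\ast}$ by Lemma~\ref{lem:envelopeUpper}, one needs $\cSkey$ to attain \emph{only} about half, which happens when the term $\startTime{\cSkey}{i_k}{\Mshared}$ in the bound $\complTime{\cSkey}{i_k}{\Mshared}-\startTime{\cSkey}{i_k}{\Mshared} = \frac{\startTime{\cSkey}{i_k}{\Mshared}+p_{i_k}}{2} - \startTime{\cSkey}{i_k}{\Mshared} = \frac{p_{i_k}-\startTime{\cSkey}{i_k}{\Mshared}}{2}$ is close to $p_{i_{k-1}}$ — that is, when consecutive key jobs have nearly equal processing times and the shared processor fills almost completely. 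Concretely, I expect an instance with $\ell$ large, $p_{i_1} \approx p_{i_2} \approx \cdots \approx p_{i_\ell}$, and geometrically decaying weights $w_{i_k}$ to make the overlap of the later key jobs vanish relative to the envelope area, driving the ratio to $1/2$; the remaining work is the routine verification of the recursion and a limit computation, which I would carry out but not reproduce in this sketch.
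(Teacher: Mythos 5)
Your first two parts are sound. The $1/2$-approximation claim is, as you say, immediate from Corollary~\ref{cor:key} together with the identity $\tct{\cS_{\textup{opt}}}=e^{\ast}$, and your running-time argument (sort by processing time with ties broken by ascending weight, then one backward pass recording each new strict suffix-maximum of the weights) is correct and in fact more explicit than what the paper offers. The genuine gap is in the tightness part, which you leave unresolved and where the one concrete candidate you settle on does not work. An instance whose key sequence $i_1,\ldots,i_\ell$ has $\ell$ large, nearly equal processing times $p_{i_1}\approx\cdots\approx p_{i_\ell}$ and strictly (say geometrically) decreasing weights is antithetical; if, as your description suggests, every job belongs to the key sequence, then $\cSkey$ is exactly the processing time ordered schedule, which by Lemma~\ref{lem:antithetical} is \emph{optimal}, so the ratio is $1$, nowhere near $1/2$. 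Your heuristic also attacks only one of the two inequalities that must both be tight: you need $\tct{\cSkey}\approx u^{\ast}/2$ (Lemma~\ref{lem:envelopeUpper} tight) \emph{and} $\tct{\cS_{\textup{opt}}}\approx u^{\ast}$ (the optimum nearly fills the envelope). The second condition is the crux, it can only be met using jobs that the key sequence \emph{discards}, and in your instance there are no discarded jobs left for the optimum to exploit.

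The paper's example is the degenerate one you circled around but never committed to: take $n$ jobs with identical weight $w>0$ and identical processing time $p>0$. Condition~\ref{it:key1} (strictly decreasing weights) forces $\ell=1$, so $\tct{\cSkey}=wp/2=u^{\ast}/2$ exactly, while the synchronized schedule placing all $n$ jobs on $\Mshared$ achieves $wp(1-2^{-n})\to wp=u^{\ast}$; hence $\tct{\cSkey}/\tct{\cS_{\textup{opt}}}\leq \frac{1}{2(1-2^{-n})}=\frac{1}{2}+\frac{1}{2(2^{n}-1)}<\frac{1}{2}+\varepsilon$ for $n$ large enough. Your passing remark about ``two job types, designed so the key sequence drops all but one heavy short job'' points in the right direction --- many tied heavy jobs, all but one of which are dropped, are precisely what packs the envelope --- but you neither pin down such an instance nor verify it, and the construction you do write down in its place fails. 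Until a concrete instance is exhibited and checked, the tightness assertion of the theorem is unproved.
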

\begin{proof}
The fact that the key sequence is a $1/2$-approximation of the optimal solution follows from Corollary~\ref{cor:key}.
The key sequence can be constructed directly from the definition and sorting the jobs in $\jobs$ according to their processing times determines the $O(n\log n)$ running time.

To close we show that the 1/2 bound for the key sequences is tight.
Take $\jobs$ to contain $n$ jobs, each of the same weight $w>0$ and the same length $p>0$.
The key sequence consists of one job and therefore the value of the corresponding schedule $\cSkey$ is $\tct{\cSkey}=wp/2$.
Take a schedule $\cS$ that places all jobs in $\jobs$ on the shared processor.
We have $\tct{\cS}=wp(1-1/2^{n})$.
For $\varepsilon>0$ take $n=\lceil\log(1/\varepsilon)\rceil$.
If $\cS_{\textup{opt}}$ is an optimal schedule, then
\[\frac{\tct{\cSkey}}{\tct{\cS_{\textup{opt}}}} \leq \frac{\tct{\cSkey}}{\tct{\cS}} = \frac{1}{2}-\frac{1}{2^{n+1}}\leq \frac{1}{2}-\varepsilon.\]
\end{proof}

\section{Open problems and further research} \label{sec:summary}
The complexity status of $\probShort$ remains open.
The generalized problem with multiple shared processors is strongly NP-hard \cite{DK16} when the number of shared processors is a part of the input.
However, it remains open whether the generalized problem with fixed number of processors is NP-hard (i.e., whether the multi-processor problem is FPT with respect to the number of processors).
This complexity result and the open complexity questions
clearly underline the difficulty in finding efficient optimization algorithms for the shared processor scheduling problem. The development of an efficient branch-and-bound 
algorithm for the problem remains unexplored so far. The 1/2-approximation algorithm along with the structural properties of optimal schedules  presented in this paper and in \cite{DK16}
may prove useful building blocks of such an algorithm.

\section*{Acknowledgements}
This research has been supported by the Natural Sciences and Engineering Research
Council of Canada (NSERC) Grant OPG0105675 and by Polish National Science Center under contract DEC-2011/02/A/ST6/00201.

\bibliographystyle{plain}
\bibliography{references}

\begin{thebibliography}{10}

\bibitem{ABGPS14}
A.~Agnetis, J.-C. Billaut, S.~Gawiejnowicz, D.~Pacciarelli, and A.~Soukhal.
\newblock {\em Multiagent Scheduling. Models and Algorithms}.
\newblock Springer, 2014.

\bibitem{A81}
E.~J. Anderson.
\newblock A new continuous model for job--shop scheduling.
\newblock {\em International Journal of System Science}, 12:1469--1475, 1981.

\bibitem{HBGR03}
V.~Bharadwaj, D.~Ghose, and T.G. Robertazzi.
\newblock Divisible load theory: A new paradigm for load scheduling in
  distributed systems.
\newblock {\em Cluster Computing}, 6:7--17, 2003.

\bibitem{BukchinH07}
Y.~Bukchin and E.~Hanany.
\newblock Decentralization cost in scheduling: {A} game-theoretic approach.
\newblock {\em Manufacturing {\&} Service Operations Management},
  9(3):263--275, 2007.

\bibitem{ChengNY06}
T.~C.~E. Cheng, C.~T. Ng, and J.~J. Yuan.
\newblock Multi-agent scheduling on a single machine to minimize total weighted
  number of tardy jobs.
\newblock {\em Theor. Comput. Sci.}, 362(1-3):273--281, 2006.

\bibitem{DK16}
D.~Dereniowski and W.~Kubiak.
\newblock Shared multi-processor scheduling.
\newblock {\em CoRR}, abs/1607.06021, 2016.

\bibitem{HK15}
B.~Hezarkhani and W.~Kubiak.
\newblock Decentralized subcontractor scheduling with divisible jobs.
\newblock {\em J. Scheduling}, 18(5):497--511, 2015.

\bibitem{LeeCLP09}
K.~Lee, B.-C. Choi, J.~Y.{-}T. Leung, and M.~L. Pinedo.
\newblock Approximation algorithms for multi-agent scheduling to minimize total
  weighted completion time.
\newblock {\em Inf. Process. Lett.}, 109(16):913--917, 2009.

\bibitem{V13}
G.~L. Vairaktarakis.
\newblock Noncooperative games for subcontracting operations.
\newblock {\em Manufacturing and Service Operations Management}, 15:148--158,
  2013.

\bibitem{VairaktarakisAydinliyim07}
G.L. Vairaktarakis and T.~Aydinliyim.
\newblock Centralization versus competition in subcontracting operations.
\newblock Technical Memorandum Number 819, Case Western Reserve University,
  2007.

\end{thebibliography}

\end{document}